\documentclass[11pt]{article}

\usepackage[T1]{fontenc}

\usepackage[USenglish]{babel}

\usepackage{microtype}



\usepackage[letterpaper, margin = 1in]{geometry}




\usepackage{mathtools}
\usepackage{amssymb}



\usepackage{amsthm}
\usepackage{thmtools}
\usepackage{thm-restate}

\usepackage{cleveref}

\usepackage{graphicx}


\usepackage{pgf,tikz}
\usepackage{mathrsfs}
\usetikzlibrary{arrows}

\theoremstyle{plain}
\newtheorem{theorem}{Theorem}[section]
\newtheorem{lemma}[theorem]{Lemma}

\newtheorem*{theorem*}{Theorem}
\newtheorem*{lemma*}{Lemma}

\theoremstyle{definition}
\newtheorem{definition}[theorem]{Definition}

\theoremstyle{remark}

\newtheorem{remark}[theorem]{Remark}


\newcommand{\es}{\emptyset}

\newcommand{\ceil}[1]{\left\lceil #1 \right\rceil}
\newcommand{\br}[1]{\left( #1 \right)}
\newcommand{\n}[1]{\lVert #1 \rVert}

\newcommand{\en}{\enspace}

\def\eps{ {\varepsilon} }
\def\N{ {\mathbb{N}} }

\def\R{ {\mathbb{R}} }
\def\C{ {\mathbb{C}} }

\def\E{ {\mathbb{E}} }
\def\O{ {\Omega} }


\DeclareMathOperator*{\argmin}{argmin}

\def\A{ {\mathcal{A}} }

\def\B{ {\mathbb{B}} }
\def\C{ {\mathcal{C}} }

\def\a{ {\alpha} }
\def\b{ {\beta} }
\def\g{ {\gamma} }
\def\c{ {\texttt{cost}} }
\def\o{ {\texttt{opt}} }
\def\d{ {\texttt{dim}} }
\def\l{ {\texttt{Lip}} }

\usepackage[square, authoryear]{natbib}
\bibliographystyle{plainnat}
\usepackage{csquotes}

\title{The Role of Dimension in the Online Chasing Problem}
\author{Hristo Papazov}
\date{}

\begin{document}

\maketitle

\begin{abstract}
    Let $(X, d)$ be a metric space and $\C \subseteq 2^X$ -- a collection of special objects. In the $(X,d,\C)$-chasing problem, an online player receives a sequence of online requests $\{B_t\}_{t=1}^T \subseteq \C$ and responds with a trajectory $\{x_t\}_{t=1}^T$ such that $x_t \in B_t$. This response incurs a movement cost $\sum_{t=1}^T d(x_t, x_{t-1})$, and the online player strives to minimize the competitive ratio -- the worst case ratio over all input sequences between the online movement cost and the optimal movement cost in hindsight. Under this setup, we call the $(X,d,\C)$-chasing problem \textit{chaseable} if there exists an online algorithm with finite competitive ratio. In the case of Convex Body Chasing (CBC) over real normed vector spaces, \cite{BLLS} proved the chaseability of the problem. Furthermore, in the vector space setting, the dimension of the ambient space appears to be the factor controlling the size of the competitive ratio. Indeed, recently, \cite{sellke} provided a $d-$competitive online algorithm over arbitrary real normed vector spaces $(\R^d, \n{\cdot})$, and we will shortly present a general strategy for obtaining novel lower bounds of the form $\O(d^c), \en c > 0$, for CBC in the same setting.
    
    In this paper, we also ask the more general question of whether metric-space analogues of dimension exert a similar control on the hardness of chasing balls over metric spaces. We answer this question in the negative for the \textit{doubling} and the \textit{Assouad} dimensions of a metric space $(X, d)$. More specifically, we prove that the competitive ratio of chasing a nested sequence of balls can be arbitrary large for metric spaces of any doubling and Assouad dimensions. We further prove that for any large enough $\rho \in \R$, there exists a metric space $(X,d)$ of doubling dimension $\Theta(\rho)$ and Assouad dimension $\rho$ such that no online selector can achieve a finite competitive ratio in the general ball chasing regime.
\end{abstract}

\section{Introduction and Related Work}

In the Convex Body Chasing (CBC) problem for a specified real normed vector space $(\R^d, \n{\cdot})$, an online player/algorithm $\A$ begins at $x_0 = 0$, receives a convex body $K_t \subseteq \R^d$ as a request at turn $t \in \N$, and moves to a new point $x_t = \A_t(K_1, \dots, K_t) \in K_t$, incurring a movement cost $\n{x_t - x_{t-1}}$. After a finite sequence of convex requests $K_1, \dots, K_T$, the game ends with the player accumulating a total cost of
\begin{equation*}
    \c_T(x_1, \dots, x_T) = \sum_{t = 1}^T \n{x_t - x_{t-1}}.
\end{equation*}
Let us denote with
\begin{equation*}
    \o_T(K_1, \dots, K_T) = \min_{\{y_t \in K_t\}_{t=1}^T} \sum_{t = 1}^T \n{y_t - y_{t-1}}
\end{equation*}
the optimal movement cost which a clairvoyant algorithm could achieve. With these definitions, the competitive ratio of $\A$ becomes
\small
\begin{equation*}
    C(\A) = \sup \{\frac{\c_T(x_1, \dots, x_T)}{\o_T(K_1, \dots, K_T)} : T \in \N, \en K_t \subseteq \R^d - \text{ convex body}, \en x_t = \A_t(K_1, \dots, K_t), \en \forall t \in [T]\},
\end{equation*}
\normalsize
and the algorithm $\A$ is said to be $C(\A)-$competitive, where $C(\A) \in [1, \infty]$. Moreover, we define the CBC problem over $(\R^d, \n{\cdot})$ as \textit{chaseable} if there exists an online player with a finite competitive ratio in that setting. 

Notice that we tacitly assumed that $\A$ follows a deterministic strategy. Instead, we could have allowed the selection $x_t(w_t) = \A_t(K_1, \dots, K_t, w_t)$ to also depend on some random bits $w_t$. However, since $\c_T : (\R^d)^T \to \R_{\geq 0}$ is convex due to the convexity of the norm $\n{\cdot}$, Jensen's Inequality implies that the deterministic algorithm $\E[\A]$ defined through $\E[\A]_t = \E_{w_t} \A_t$ achieves a lower total cost for any sequence of convex requests $K_1, \dots, K_T$:
\begin{equation*}
    \c_T(\E_{w_1, \dots, w_T} \left[(x_1(w_1), \dots, x_T(w_T))\right]) \leq \E_{w_1, \dots, w_T} \left[\c_T(x_1(w_1), \dots, x_T(w_T))\right].
\end{equation*}

Hence, for the CBC problem, we can restrict our attention only to deterministic online algorithms. In a more general setting of the chasing problem, however, randomness becomes advantageous. Indeed, if we only keep the minimal structural assumptions necessary for the chasing problem to make sense, we arrive at the Metrical Service Systems (MSS) problem, where the online player moves inside a metric space $(X, d)$ and the online sequence of requests $S_1, \dots, S_T$ consists of arbitrary subsets of $X$.\footnote{Note that in the metric space setting, the performance of an online algorithm also depends on the starting point $x_0$, so the competitive ratio in this case will be calculated as a supremum over all allowed request sequences and starting configurations.} In this most abstract case of the chasing problem, \citet{MMS} proved $|X| - 1$ to be the best possible competitive ratio. Nonetheless, when the online player has access to random bits for point selection, the papers \citep{bubeck2021metrical} and \citep{bubeck2022randomized} establish matching upper and lower bounds of $\Theta(\log^2|X|)$ for the best randomized competitive ratio.

Therefore, as the size of the ambient metric space $X$ increases, the competitive ratio of the MMS problem grows prohibitively large. And clearly, when $|X| = \infty$, no finitely competitive deterministic or randomized online algorithm for MMS exists. However, with the introduction of some restrictions on the family of chased-after objects $\C \subseteq 2^X$, the problem setting becomes richer,\footnote{We invite the reader to consider the introduction of \citep{sellke} for references to many online problems -- such as the $k-$server problem, online set covering, and scheduling -- which emerge from metrical service systems through restrictions on the family of possible requests.} and competitive algorithms begin to appear even over infinite metric spaces.

The problem of convex body chasing over real normed vector spaces $(\R^d, \n{\cdot})$, first introduced by \citet{FL}, emerged as one such convex relaxation of the MMS problem. Friedman and Linial proved a finite competitive ratio in the two-dimensional Euclidean setting and conjectured that a finitely competitive algorithm exists in general. The initial efforts towards resolving the conjecture proved the chaseability of the problem for subspace chasing \citep{ABNPSS} and \textit{nested}\footnote{The adversary can request only collapsing sequences of convex bodies $K_1 \supseteq \dots \supseteq K_T$.} convex body chasing \citep{BBEKU}. Later work from Argue et al. \citet{ABCGL} gave an $O(d \log d)-$competitive algorithm for nested CBC over any norm, and most recently in the same regime, \citet{BKLLS} provided an algorithm, which moves to a mixture of the Steiner point and the centroid of a convex body and achieves a competitive ratio optimal up to an $O(\log d)$ factor over $\ell_p-$normed spaces. In the general (non-nested) setting, \citet{BLLS} resolved the conjecture for Euclidean spaces by demonstrating the existence of a $2^{O(d)}-$competitive online algorithm. Finally, \citet{sellke} and \citet{AGTG} independently gave an $O(\min(d, \sqrt{d \log T}))-$competitive algorithm for convex body chasing in Euclidean spaces. Furthermore, Sellke's Steiner-point-based algorithm achieves competitive ratio at most $d$ for arbitrary normed spaces, which matches the lower bound for the $\ell_\infty$ norm.

We stated above that the nested CBC algorithm of \citet{BKLLS} performs nearly optimally in $\ell_p$ spaces, and we described the algorithm from \citep{sellke} as the optimal for $\ell_\infty$ spaces. We will now discuss existing lower bounds on the competitive ratio for nested CBC in order to justify these claims.

\begin{definition}
    Let $\mathcal{D}(X, d, \C)$ be the set of all all deterministic online chasing algorithms in the $(X, d, \C)$ setting. We define the competitive ratio of the $(X, d, \C)$-chasing problem as the following infimum:
    \begin{equation*}
        R(X, d, \C) = \inf_{\A \in \mathcal{D}(X, d, \C)} C(\A).
    \end{equation*}
    We similarly define $\hat{R}(X, d, \C)$ as the competitive ratio of the nested $(X, d, \C)$-chasing problem.
\end{definition}

Note that we will only write $R(X)$ and $\hat{R}(X)$ when the distance metric $d$ and the family of chased-after sets $\C$ are clear from context. Also, notice that trivially $R(X) \geq \hat{R}(X)$.

Now, by sequentially considering the faces of the unit hypercube $$F_0 = \{\mathbf{z} \in \R^d : \n{z}_\infty = 1\}, \en F_t = F_{t-1} \cap \{\mathbf{z} \in \R^d : z_t = \pm 1\}, \en t \geq 1,$$ depending on which one stands farther away from the online selector at time $t$, \citet{FL} showed a lower bound of $d$ on the competitive ratio for the nested $\ell_\infty$ case. The same construction leads to a lower bound of $d^{1 - \frac{1}{p}}$ for $d-$dimensional $\ell_p$ spaces, $p \geq 1$. Apart from these unit-cube lower bounds, \citet{BKLLS} used the existence of a Hadamard matrix in dimensions $d = 2^k$ to prove that $\hat{R}(\ell_p^d) \geq \O(\sqrt{d})$, where $\ell_p^d = (\R^d, \n{\cdot}_p)$. Thus, the following lower bounds on the nested CBC competitive ratio have been proven to hold:
\begin{itemize}
    \item $\hat{R}(\ell_p^d \geq \O(d^{1 - \frac{1}{p}})$ for $d \in \N$, $p \in [0, \infty]$;
    \item $\hat{R}(\ell_p^d \geq \O(\max(d^{1 - \frac{1}{p}}, \sqrt{d}))$ for $d = 2^k, \en k \in \N$, $p \in [0, \infty]$.
\end{itemize}
In particular, no lower bounds have been established for non-$\ell_p$ norms.

\paragraph{Our Results.}

As our first main result, we develop a general strategy for transferring bounds on the competitive ratio of the chasing problem from one setting $(X, d_X, \C_X)$ to another $(Y, d_Y, \C_Y)$. We then connect this general approach with several results from Convex Geometry to prove the following theorem, which generalizes the aforementioned lower bound from \citet{BKLLS} and provides the first known lower bound on nested CBC for arbitrary normed spaces:
\begin{theorem} \label{lower_bounds}
    For every $d \in \N$ and $p \in [0, \infty]$, no online algorithm can achieve a competitive ratio better than $\O(\max(d^{1 - \frac{1}{p}}, \sqrt{d}))$ in the nested convex body chasing problem over $\ell_p^d$. Furthermore, no online algorithm can achieve a competitive ratio better than $\O(d^{1/6})$ in the nested convex body chasing problem over arbitrary normed spaces $(\R^d, \n{\cdot})$.
\end{theorem}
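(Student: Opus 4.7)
The claim splits into two parts, both deduced from the transfer principle described earlier in this section.

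For the $\ell_p^d$ bounds, the $\O(d^{1-1/p})$ summand is obtained directly from the Friedman--Linial hypercube adversary executed in $\ell_p^d$, with no transfer needed. For the $\O(\sqrt{d})$ summand, I would start from the BKLLS Hadamard-based lower bound $\hat{R}(\ell_p^{d'}) \geq \O(\sqrt{d'})$, valid whenever $d' = 2^k$, and extend it to arbitrary $d$ by choosing $d' = 2^{\lfloor \log_2 d \rfloor} \geq d/2$ and invoking the transfer principle along the isometric coordinate inclusion $\iota \colon \ell_p^{d'} \hookrightarrow \ell_p^d$. A nested chain $K_1 \supseteq \cdots \supseteq K_T$ in $\ell_p^{d'}$ pushes forward to the chain $K_t \times \{0\}^{d-d'}$ in $\ell_p^d$, which is again nested and convex; the transfer (with distortion $1$) yields $\hat{R}(\ell_p^d) \geq \hat{R}(\ell_p^{d'}) = \O(\sqrt{d'}) = \O(\sqrt{d})$. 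Taking the maximum of the two summands completes the first part.

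For the bound on arbitrary normed spaces, the plan is to transfer the Euclidean lower bound $\hat{R}(\ell_2^k) = \O(\sqrt{k})$ from the first part via a linear embedding $T \colon \ell_2^k \to (\R^d, \n{\cdot})$ supplied by convex geometry. Since linear pushforwards preserve convexity and nesting, the only price paid by the transfer principle is a factor equal to the Banach--Mazur distortion $\alpha = d_{BM}(T(\ell_2^k), \ell_2^k)$ of the embedding, giving $\hat{R}(\R^d, \n{\cdot}) \geq \O(\sqrt{k}/\alpha)$. I would then invoke a Milman--type Euclidean-subspace theorem providing, for every $d$-dimensional normed space, a $k$-dimensional subspace with distortion $\alpha \leq C\sqrt{d/k}$. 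Balancing $k \sim d^{2/3}$ against $\alpha \sim d^{1/6}$ produces
\[
    \hat{R}(\R^d, \n{\cdot}) \;\geq\; \O\br{\frac{\sqrt{d^{2/3}}}{d^{1/6}}} \;=\; \O\br{d^{1/6}}.
\]

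The main technical obstacle is pinning down the correct convex-geometry input with exactly this trade-off: John's theorem alone ($k = d$, $\alpha = \sqrt{d}$) yields the trivial bound $\O(1)$, and Dvoretzky's theorem (constant $\alpha$, $k = \Theta(\log d)$) yields only $\O(\sqrt{\log d})$, so the $d^{1/6}$ exponent forces a genuine interpolation between these two extremes. If the sharpest available Milman-type estimate carries spurious logarithmic factors or is restricted to a narrower range of $k$, they can either be absorbed into the $\O$-notation or handled by a slight degradation of the exponent. The remaining bookkeeping---checking that linear images of convex bodies are convex, that nesting is preserved under $T$, and that the clairvoyant cost in the target space is controlled by the clairvoyant cost in the source space up to the factor $\alpha$---is routine.
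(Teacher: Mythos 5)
Your argument splits into two parts; the first is valid (though different from the paper's route), while the second contains a fundamental gap.

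\paragraph{Part one ($\ell_p$ bounds).} Your route through the isometric coordinate inclusion $\iota\colon \ell_p^{d'}\hookrightarrow\ell_p^d$ with $d'=2^{\lfloor\log_2 d\rfloor}\geq d/2$ correctly extends the BKLLS Hadamard bound to all $d$. The paper does not do this: instead it discards the BKLLS bound entirely and re-derives $\hat{R}(\ell_p^d)\geq\Omega(\sqrt{d})$ for $p\in[1,2)$ from the Friedman--Linial bound $\hat{R}(\ell_\infty^d)\geq d$ together with the classical estimate $d_{BM}(\ell_p^d,\ell_\infty^d)=\Theta(\sqrt{d})$ from Tomczak-Jaegermann's Proposition 37.6. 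Both routes reach the same conclusion; yours reuses BKLLS, the paper's replaces it with a Banach--Mazur calculation. Either works. One point you should verify explicitly: the image $\iota(K_t)$ lives in a proper linear subspace of $\R^d$, so you need the paper's notion of ``convex body'' to include degenerate (lower-dimensional) convex sets for $\iota(\C_X)\subseteq\C_Y$ to hold. This is fine here because the Friedman--Linial and BKLLS constructions themselves use hypercube faces, so the paper must already be allowing such requests.

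\paragraph{Part two (arbitrary norms).} Here there is a genuine error. The ``Milman-type Euclidean-subspace theorem'' you invoke --- that every $d$-dimensional normed space contains a $k$-dimensional subspace $Y$ with $d_{BM}(Y,\ell_2^k)\leq C\sqrt{d/k}$ --- is false. Setting $k=d$, the statement would give $d_{BM}(X,\ell_2^d)\leq C$ for every $d$-dimensional $X$, contradicting the sharpness of John's theorem at $X=\ell_\infty^d$ where $d_{BM}(\ell_\infty^d,\ell_2^d)=\sqrt{d}$. This is not a matter of ``spurious logarithmic factors or a narrower range of $k$'' that can be absorbed: the genuine convex-geometric tools in this direction (the low $M^*$-estimate of Milman, the proportional Dvoretzky-type theorems) produce, in the worst case, distortions on the order of $\sqrt{d/\log d}$ rather than $\sqrt{d/k}$, and plugging that into your balancing calculation gives only $\hat{R}(X)\geq\Omega(\sqrt{\log d})$ --- no better than what Dvoretzky's theorem alone delivers, and far short of $d^{1/6}$.

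The paper's actual argument does not pass through $\ell_2^k$ at all. It instead uses $\ell_\infty^d$ as the intermediary: by the Giannopoulos/Youssef theorem, $\sup_{X\in\mathcal{B}_d}d_{BM}(X,\ell_\infty^d)\leq 2d^{5/6}$, and combining this with $\hat{R}(\ell_\infty^d)\geq d$ via inequality (\ref{vector_distortion}) gives $\hat{R}(X)\geq d/(2d^{5/6})=\tfrac{1}{2}d^{1/6}$ directly. The point you are missing is that $\ell_\infty^d$ has a linear-in-$d$ lower bound --- the strongest known --- so even though its Banach--Mazur distance to a general $X$ is worse than what $\ell_2^k$ might offer, the trade-off $d / d^{5/6}$ is much more favorable than $\sqrt{k}/\alpha$ for any realistic $\alpha$. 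Working through the Euclidean-subspace picture cannot produce the $d^{1/6}$ exponent.
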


Our result combined with Sellke's algorithm \citep{sellke} proves that the competitive ratio of CBC satisfies the inclusion $R(\R^d, \n{\cdot}) \in \left[ \O(d^{1/6}), d \right]$ for any norm $\n{\cdot}$. Hence, the vector-space dimension $d$ appears to be the quantity controlling the hardness of CBC.

As our second research direction, we investigate the question of whether metric-space analogues of dimension exert a similar control over the competitive ratio $R(X, d, \C)$ in the general $(X, d, \C)$-chasing problem. Of course, the family of chased-after objects $\C$ can no longer be the set of convex bodies since convexity requires a linear structure. We also want to impose some meaningful restrictions on $\C$ in order for the problem be interesting because setting $\C = 2^X$ brings us back to the already resolved metrical service systems problem. Hence, we decide to consider closed balls -- an object native to any metric space -- as a metric-space substitute for convex-body requests.
\begin{definition}
    Given $x \in X$ and $R \geq 0$, the closed ball $\bar{B}(x, R)$ is defined as the set of points within radius $R$ from $x$:
    \begin{equation*}
        \bar{B}(x, R) = \{ y \in X : d(x, y) \leq R \}.
    \end{equation*}
    We will use $B(x, R)$ to denote the open ball of radius $R$ around $x$.
\end{definition}
We can now set $\C = \{ \bar{B}(x, R) : x \in X, \en R \geq 0 \}$ to fully specify the Ball Chasing (BC) problem and the Nested Ball Chasing (NBC) problem with competitive ratios $R(X) \geq \hat{R}(X)$, which we will try to characterize in terms of the $\g-$cover dimension and the Assouad dimension of $X$. We will postpone the definitions of the $\g-$cover and Assouad dimensions to Section \ref{preliminaries} and proceed with stating our NBC hardness results.

\begin{restatable}{theorem}{doubling} \label{doubling}
    For any $\g > 1$ and $\rho, N > 0$, there exists a metric space $(X, d)$ with $\g-$cover dimension $\d_\g(X) = \Theta_\g(\rho)$ such that no deterministic online algorithm can achieve a competitive ratio better than $N$ in the nested ball chasing problem over $X$. In other words, $R(X) \geq \hat{R}(X) \geq N$.
\end{restatable}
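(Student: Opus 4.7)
The plan is to construct $X$ as a hierarchical metric space whose parameters are controlled by $\g$, $\rho$, and $N$: a branching factor $k = \lceil \g^{\rho} \rceil$ will govern the $\g$-cover dimension, while a recursion depth $D$ chosen as a function of $N$ will give the adversary enough rounds to play.

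First, I would exhibit a ``base gadget'' $G$---a finite metric on $k$ terminal points plus a distinguished starting point---with the following property: a single nested ball request inside $G$ forces any online player to incur cost $\Theta(1)$, while a designated target point $y^{\star}$ sits at distance only $O(1/\g)$ from the start. The full space $X$ is then built by recursively substituting each terminal of $G$ with a $(1/\g)$-rescaled copy of $G$ itself, iterated $D$ times, producing a self-similar metric of depth $D$.

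Second, I would verify $\d_\g(X) = \Theta_\g(\rho)$ by a recursive covering estimate: any ball of radius $r$ lies inside a single level-$i$ sub-piece (with $i$ such that $\g^{-i} \approx r$), and that sub-piece is covered by its $k = \g^\rho$ sub-sub-pieces at scale $\g^{-(i+1)}$. For the adversary, I would describe a level-by-level adaptive strategy against any deterministic algorithm $\A$: at level $i$, the adversary requests a nested ball corresponding to a sub-piece that excludes $\A$'s current position, while preserving a target $y^{\star}$ inside every subsequent ball. Summing the forced moves gives total online cost $\O_\g(1)$, whereas the adversary arranges $d(x_0, y^{\star}) = O_\g(\g^{-D})$, yielding competitive ratio $\O_\g(\g^{D})$. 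Setting $D = \lceil \log_\g N \rceil$ concludes.

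The main technical obstacle, and where I expect the core novelty to lie, is engineering the base gadget $G$ so that two competing requirements hold at every scale of the recursion: the covering bound needed for $\d_\g(X) = \Theta_\g(\rho)$, and the existence of an adversarial nested ball forcing cost $\Theta_\g(1)$ per level while still harboring a ``shortcut'' target $y^{\star}$ close to $x_0$. I expect this to involve a metric in which the triangle inequality is close to tight along the adversary's forcing sequence but far from tight along a geodesic from $x_0$ to $y^{\star}$, so that the player's forced trajectory is substantially longer than the direct distance between its endpoints.
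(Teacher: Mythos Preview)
Your hierarchical plan has a genuine gap, and it is precisely the obstacle you flag in your last paragraph---only it is not merely technical but fatal to the stated parameters. You want the surviving target $y^\star$ to sit at distance $O_\g(\g^{-D})$ from $x_0$ while the $D$ nested requests force total online cost $\Omega_\g(1)$. But $y^\star$ lies in every requested ball $B_t$, so the greedy rule $x_t=\argmin_{x\in B_t}d(x,x_0)$ always has a point at distance at most $d(x_0,y^\star)$ available; a plain triangle-inequality telescoping gives total greedy cost at most $(2T-1)\,d(x_0,y^\star)$ after $T$ rounds. With $T=D$ and $d(x_0,y^\star)=O_\g(\g^{-D})$ this algorithm is $(2D-1)$-competitive, so no adversary restricted to $D$ nested balls can force ratio $\g^{D}$, and taking $D=\lceil\log_\g N\rceil$ yields only ratio $O(\log_\g N)$, not $N$. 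Your hoped-for metric ``with triangle inequality far from tight along a geodesic from $x_0$ to $y^\star$'' cannot rescue this, because the greedy bound uses nothing but distances to the fixed anchor $x_0$; no slack elsewhere in the metric affects it.

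The paper's construction is structurally different and sidesteps this obstruction by decoupling the number of rounds from the depth of any hierarchy. It takes $X=P\cup Z$ where $P$ is a $k$-dimensional $\ell_\infty$ grid on $n=D^k$ points (the play area, with $k=\Theta(\rho)$ controlling the $\g$-cover dimension) and $Z$ is a far-away line of $2^{n}-2$ auxiliary points $z_S$, one per proper nonempty $S\subsetneq[n]$, with distances $d(z_S,p_j)\in\{R,R{+}1\}$ tuned so that the closed ball $\bar B(z_S,R)$ equals $Z\cup\{p_j:j\notin S\}$. Thus \emph{every} proper subset of $P$ (union $Z$) is realised as a metric ball, and the adversary can adaptively peel off the player's current grid point at each step, forcing a full traversal of $P$ (online cost $\ge n-1$) while the optimum pays only $\operatorname{diam}(P)=D-1$; the ratio $D^{k-1}$ is made to exceed $N$ by taking $D$ large. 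The key idea your proposal lacks is that ball \emph{centres} need not live in the region where the game is actually played: placing them far away gives the adversary enough freedom to carve out arbitrary subsets of $P$ as metric balls, so the combinatorial power of the request sequence is completely divorced from the covering dimension of the space.
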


\begin{remark}
    We use the notation $\Theta_\g(\cdot), \en \O_\g(\cdot)$, and $O_\g(\cdot)$ to indicate that the hidden constants depend only on $\g$.
\end{remark}

\begin{restatable}{theorem}{assouad} \label{assouad}
    There exists an absolute constant $c > 0$ such that for every $\rho \geq c$, there exists a countable metric space $(X, d)$ with Assouad dimension $\d_A(X) = \rho$ such that no deterministic online algorithm can achieve a finite competitive ratio in the nested ball chasing problem over $X$. In other words, $R(X) = \hat{R}(X) = \infty$.
\end{restatable}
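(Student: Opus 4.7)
The strategy is to iterate the construction behind Theorem \ref{doubling} at infinitely many scales while keeping the Assouad dimension tightly controlled. The plan has four steps.

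First, for each $N \in \N$, we invoke Theorem \ref{doubling} (with some fixed $\g$ and a dimension parameter $\rho' = \Theta(\rho)$) to obtain a countable metric space $(X_N, d_N)$ with a distinguished basepoint $o_N$ satisfying $\hat{R}(X_N) \geq N$ when the player starts at $o_N$. We also verify, by revisiting the specific construction, that $X_N$ can be arranged to have Assouad dimension at most $\rho$: the idea is to engineer the covering behavior to be essentially scale-invariant, so that the $\g$-cover bound promotes to an Assouad bound up to an absolute constant.

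Second, we combine these spaces into a single countable metric space $X$ by attaching rescaled copies $(X_N, \delta_N d_N)$ at a common basepoint $o$. Formally, we identify each $o_N$ with $o$, and for points in distinct copies we declare
\begin{equation*}
    d((x, N), (y, M)) = \delta_N d_N(x, o_N) + \delta_M d_M(y, o_M).
\end{equation*}
The scale factors $\delta_N$ decay geometrically with a sufficiently large ratio so that the pieces live at essentially disjoint metric scales; triangle inequality across the wedge follows from a direct three-case check.

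Third, we control $\d_A(X) = \rho$ from both sides. For the upper bound, any metric ball $B(x, R) \subseteq X$ is essentially contained in a single rescaled piece at its own native scale by the scale-separation of the $\delta_N$; this lets us bound its $r$-covering number by the per-piece Assouad bound from the first step, yielding $\d_A(X) \leq \rho$. The matching lower bound $\d_A(X) \geq \rho$ follows from the fact that each $X_N$ already has Assouad dimension at least its $\g$-cover dimension $\Theta_\g(\rho)$, which we arrange by our parameter choice to equal $\rho$; the requirement $\rho \geq c$ absorbs the constants hidden in $\Theta_\g(\cdot)$ and guarantees that such a calibration is possible.

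Fourth, to show $\hat{R}(X) = \infty$, fix a deterministic online algorithm $\A$ on $X$ and $N \in \N$. The adversary first emits a nested sequence shrinking $X$ down to the $N$-th rescaled piece $\delta_N X_N$, driving $\A$ into that piece at approach cost $O(\delta_N)$, and then replays the forcing sequence guaranteed by Theorem \ref{doubling} rescaled by $\delta_N$. Because the rescaled forcing cost and opt both scale linearly with $\delta_N$, and the approach cost is dominated by $\delta_N \cdot \mathrm{diam}(X_N) = O(\delta_N)$, the competitive ratio of $\A$ on the full sequence is within an absolute constant of the ratio Theorem \ref{doubling} guarantees on $X_N$, namely $\O(N)$; letting $N \to \infty$ gives $\hat{R}(X) = \infty$.

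The main obstacle is the first and third steps: the Assouad dimension is a supremum over all scale ratios, while Theorem \ref{doubling} only controls $\g$-cover dimension at a single scale. We expect the construction of Theorem \ref{doubling} to be self-similar enough for its Assouad dimension to coincide with its $\g$-cover dimension up to an absolute constant, but verifying this cleanly and calibrating parameters so that $\d_A(X)$ lands exactly on $\rho$ for every $\rho \geq c$ is the technical crux of the argument.
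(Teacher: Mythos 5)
Your plan diverges from the paper's in an important way, and the divergence exposes a gap that you partly flag but ultimately do not close.

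In Step 3 you write that the lower bound $\d_A(X)\geq\rho$ ``follows from the fact that each $X_N$ already has Assouad dimension at least its $\g$-cover dimension.'' This inverts the true relationship: Lemma \ref{doubling_assouad} gives $\d_A(X)\leq\d_\g(X)$, never the reverse. Worse, each $X_N$ produced by Theorem \ref{doubling} is a \emph{finite} metric space, so $\d_A(X_N)=0$ regardless of how large its $\g$-cover dimension is. You therefore cannot inherit a nontrivial Assouad lower bound from the pieces. One could in principle try to make the \emph{wedge as a whole} hit dimension $\rho$ by nesting the rescaled pieces so that at every scale-ratio $R/r$ you see $\asymp(R/r)^\rho$ points, but this is exactly the ``technical crux'' you name without resolving, and it is genuinely delicate: you would need the geometry of each $X_N$, the rescaling factors $\delta_N$, and the interaction of balls centered near the wedge point to conspire to give both the matching lower bound and a uniform upper bound, and then to do so for \emph{every} $\rho\geq c$, not just one value. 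The same issue also threatens your Step 3 upper bound: a ball near the wedge point straddles several pieces, and ``scale separation'' by a geometric decay of $\delta_N$ does not by itself yield the covering estimate you assert.

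The paper avoids this entirely by decoupling the two goals. It keeps all the finite pieces at \emph{constant} $\g$-cover dimension $c$ and glues them at large distances (not rescaled, so there is no exotic multi-scale structure), giving $\hat R = \infty$ with $\d_\g = c$ and hence $\d_A \leq c$. The prescribed Assouad dimension $\rho$ then comes from an entirely separate attached piece: a countable subset $L$ of a ball in $\R^m$ chosen via the intermediate-value theorem of \citet{CWW} and \citet{WW} (Theorem \ref{ivp}) to have $\d_A(L)=\rho$ exactly. This cleanly supplies the lower bound you are missing and makes the upper bound a two-case union argument. If you want to salvage your wedge-sum route, you would need to replace the dimension-calibration step of the $X_N$'s with something playing the role of the intermediate-value theorem; as written, the lower bound on $\d_A(X)$ is not established.
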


\section{Competitive Ratio Bounds through Metric Embeddings}

Our goal in this section will be to prove Theorem \ref{lower_bounds} by developing a general method for transferring competitive-ratio information between different settings of the chasing problem. Without further ado, let $(X, d_X, \C_X)$ and $(Y, d_Y, \C_Y)$ be two distinct chasing environments.

\begin{lemma} \label{transfer}
    Suppose there exists a bi-Lipschitz metric embedding $f : (X, d_X) \to (Y, d_Y)$ with distortion $D(f) = \l(f)\l(f^{-1})$such that $\forall K \in \C_X, \en f(K) \in \C_Y$. Then, every deterministic online algorithm $\A_Y \in \mathcal{D}(Y, d_Y, \C_Y)$ with competitive ratio $C(A_Y)$ induces a deterministic online algorithm $\A_X \in \mathcal{D}(X, d_X, \C_X)$ through $f^{-1}$ such that $C(A_X) \leq C(A_Y)D(f)$. 
\end{lemma}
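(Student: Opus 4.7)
The plan is to let $\A_X$ play by simulating $\A_Y$ through the embedding $f$, then pull the $Y$-response back through $f^{-1}$. Concretely, when confronted with a request sequence $K_1, \dots, K_T \in \C_X$ starting at some $x_0 \in X$, the algorithm $\A_X$ feeds $\A_Y$ the image sequence $f(K_1), \dots, f(K_T) \in \C_Y$ starting at $f(x_0) \in Y$ --- a legitimate $Y$-input by the hypothesis $f(\C_X) \subseteq \C_Y$ --- obtains online points $y_t = (\A_Y)_t(f(K_1), \dots, f(K_t)) \in f(K_t)$, and outputs $x_t := f^{-1}(y_t) \in K_t$. Since $f$ is deterministic and $\A_Y$ is deterministic, so is $\A_X$.

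The online cost comparison is a single application of $\l(f^{-1})$:
\begin{equation*}
    \c_T(x_1, \dots, x_T) = \sum_{t=1}^T d_X\!\left(f^{-1}(y_t), f^{-1}(y_{t-1})\right) \leq \l(f^{-1}) \sum_{t=1}^T d_Y(y_t, y_{t-1}).
\end{equation*}
For the optimal benchmark the comparison goes the other direction: pushing an optimal $X$-trajectory $\{x_t^*\}$ with $x_t^* \in K_t$ forward via $f$ yields a feasible $Y$-trajectory $\{f(x_t^*)\}$ with $f(x_t^*) \in f(K_t)$ whose movement cost is at most $\l(f) \cdot \o_T(K_1, \dots, K_T)$; hence $\o_T(f(K_1), \dots, f(K_T)) \leq \l(f) \cdot \o_T(K_1, \dots, K_T)$.

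Chaining the two inequalities with the defining relation $\sum_t d_Y(y_t, y_{t-1}) \leq C(\A_Y) \cdot \o_T(f(K_1), \dots, f(K_T))$ yields $\c_T(x_1, \dots, x_T) \leq \l(f^{-1})\l(f) C(\A_Y) \cdot \o_T(K_1, \dots, K_T) = D(f) C(\A_Y) \cdot \o_T(K_1, \dots, K_T)$; taking the supremum over all $X$-request sequences gives $C(\A_X) \leq D(f) C(\A_Y)$.

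I do not anticipate a genuine obstacle here --- the lemma is essentially a bookkeeping exercise combining bi-Lipschitz distortion with the definition of competitive ratio. The only subtle point worth flagging is that the simulation implicitly transports the starting configuration along: in frameworks such as CBC where $x_0 = 0$ is fixed by convention, one must either restrict attention to embeddings with $f(0) = 0$ or, as already noted in the introduction for the general metric setting, let the competitive ratio be a supremum over starting points too.
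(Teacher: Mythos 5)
Your proposal is correct and follows essentially the same route as the paper: simulate $\A_Y$ on the pushed-forward request sequence, pull the response back through $f^{-1}$, bound the online cost by $\l(f^{-1})$ times the $Y$-cost, and bound the $X$-optimum below by the $Y$-optimum divided by $\l(f)$ (you phrase this as pushing the optimal $X$-trajectory forward, the paper phrases it as a termwise lower bound on the infimand, but these are the same calculation). Your closing caveat about the starting point is also consistent with the paper, which carries $x_0$ through the argument and takes the supremum over starting configurations; in the CBC application this is moot since the embeddings used are linear and fix the origin.
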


\begin{proof}
    We define the deterministic online algorithm $\A_X \in \mathcal{D}(X, d_X, \C_X)$ the following way. Let $y_0 = f(x_0)$ be the starting point in the $Y-$space. Given a sequence of online requests $\{K_t\}_{t=1}^T \subseteq \C_X$, we run $\A_Y$ on the embedded request sequence $\{f(K_t)\}_{t=1}^T \subseteq \C_Y$ to obtain an online trajectory $\{y_t\}_{t=0}^T$. We then define the action of $\A_X$ at time $t$ as the movement towards point $x_t = f^{-1}(y_t)$. Now,
    \begin{equation*}
        \c_T(x_0, x_1, \dots, x_T) = \sum_{t=1}^T d(x_t, x_{t-1}) \leq \l(f^{-1}) \sum_{t=1}^T d(y_t, y_{t-1}) = \l(f^{-1}) \c_T(y_0, y1, \dots, y_T).
    \end{equation*}
    Moreover,
    \begin{align*}
        \o_T(x_0, K_1, \dots, K_T) & = \inf_{\{a_t \in K_t\}_{t=1}^T} \sum_{t = 1}^T \n{a_t - a_{t-1}} \\
        & \geq \inf_{\{a_t \in K_t\}_{t=1}^T} \sum_{t = 1}^T \frac{1}{\l(f)}\n{f(a_t) - f(a_{t-1})} \\
        & \geq \frac{1}{\l(f)} \inf_{\{b_t \in f(K_t)\}_{t=1}^T} \sum_{t = 1}^T \n{b_t - b_{t-1}} \\
        & = \frac{1}{\l(f)} \o_T(f(x_0), f(K_1), \dots, f(K_T)).
    \end{align*}
    Hence,
    \begin{align*}
        C(\A_X) & = \sup_{x_0 \in X, T \in \N, \{K_t\}_{t=1}^T \subseteq \C_X} \frac{\c_T(x_0, x_1, \dots, x_T)}{\o_T(x_0, K_1, \dots, K_T)} \\
        & \leq \sup_{x_0 \in X, T \in \N, \{K_t\}_{t=1}^T \subseteq \C_X} \l(f)\l(f^{-1}) \frac{\c_T(y_0, y_1, \dots, y_T)}{\o_T(f(x_0), f(K_1), \dots, f(K_T))} \\
        & \leq \sup_{y_0 \in Y, T \in \N, \{S_t\}_{t=1}^T \subseteq \C_Y} D(f) \frac{\c_T(y_0, y_1, \dots, y_T)}{\o_T(y_0, S_1, \dots, S_T)} \\
        & = D(f) C(\A_Y).
    \end{align*}
\end{proof}

Now, let $f : (X, d_X) \to (Y, d_Y)$ be a bi-Lipschitz metric embedding which maps $\C_X$ to $\C_Y$. For any algorithm $\A_Y \in \mathcal{D}(Y, d_Y, \C_Y)$, let us denote with $f^{-1}(\A_Y) \in \mathcal{D}(X, d, \C)$ the online algorithm obtained by the procedure in the above lemma. Then,
\begin{align*}
    R(X) & = \inf_{\A_X \in \mathcal{D}(X, d, \C)} C(\A_X) \\
    & \leq \inf_{f^{-1}(\A_Y) : \A_Y \in \mathcal{D}(Y, d_Y, \C_Y)} C(f^{-1}(\A_Y)) \\
    & \leq \inf_{\A_Y \in \mathcal{D}(Y, d_Y, \C_Y)} D(f) C(A_Y) \\
    & = D(f) R(Y).
\end{align*}
Similarly, $\hat{R}(X) \leq D(f) \hat{R}(Y)$. Furthermore, if the metric embedding $f$ is bijective such that $f(\C_X) = \C_Y$, then
\begin{equation} \label{metric_distortion}
    \boxed{\frac{1}{D(f)} \leq \frac{R(X)}{R(Y)} \leq D(f) \en\en\en \text{and} \en\en\en
    \frac{1}{D(f)} \leq \frac{\hat{R}(X)}{\hat{R}(Y)} \leq D(f).}
\end{equation}

Thus, metric embeddings which also preserve chasing families allow us to transfer competitive ratio upper and lower bounds between different problem settings. Let us see how we can apply this approach to derive novel lower bounds on the competitive ratio for convex body chasing over different norms.

When $(X, \n{\cdot}_X)$ and $(Y, \n{\cdot}_Y)$ are $d-$dimensional real normed vector spaces and the chased-after objects are convex sets, we will focus on non-singular linear maps $T : X \to Y$ as our choice for metric embeddings since linear maps conserve convexity under the image and preimage operations. We will denote with $\n{T} = \sup_{x : \n{x}_X = 1} \n{Tx}_Y$ and $\n{T^{-1}} = \sup_{y : \n{y}_Y = 1} \n{T^{-1}y}_X$ the operator norms of the bijective linear maps $T$ and $T^{-1}$. Hence, $D(T) = \n{T}\n{T^{-1}}$. Now, the Banach-Mazur distance\footnote{We refer the reader to \citep{tomczak-jaegermann} for a thorough introduction to the geometry of Banach spaces and the properties of the Banach-Mazur distance.} between the $d-$dimensional Banach spaces $X$ and $Y$ is defined as
\begin{equation*}
    d_{BM}(X, Y) = \inf \{D(T) | T : X \to Y - \text{isomorphism} \},
\end{equation*}
and we conclude that
\begin{equation} \label{vector_distortion}
    \boxed{\frac{1}{d_{BM}(X, Y)} \leq \frac{R(X)}{R(Y)} \leq d_{BM}(X, Y) \en\en\en \text{and} \en\en\en
    \frac{1}{d_{BM}(X, Y)} \leq \frac{\hat{R}(X)}{\hat{R}(Y)} \leq d_{BM}(X, Y).}
\end{equation}

The unit-cube construction from \citep{FL} gave us the lower bound $\hat{R}(\ell_p^d) \geq d^{1 - \frac{1}{p}}$, where $p \geq 1$. Hence, for any $d-$dimensional Banach space $X$, inequality (\ref{vector_distortion}) implies
\begin{equation*}
    \hat{R}(X) \geq \frac{d^{1 - \frac{1}{p}}}{d_{BM} (X, \ell_p^d)}.
\end{equation*}
Let $\mathcal{B}_d$ be the set of all $d-$dimensional Banach spaces. As a starting point of our search for convenient Banach-Mazur distances to $\ell_p$ spaces, we recall John's Theorem \citep{john} (see \citep{ball} for proof and discussion), which states that $\sup_{X \in \mathcal{B}_d}d_{BM}(X, \ell_2^d) = \sqrt{d}$. Unfortunately, this bound leads to the trivial inequality $\hat{R}(X) \geq 1$. However, bounds on the Banach-Mazur distance to $\ell_\infty^d$ prove more fruitful.\footnote{\citet{tikhomirov} proved the best known lower bound $\inf_{X \in \mathcal{B}_d}d_{BM}(X, \ell_\infty^d) \geq d^{5/9} \log^{-C} d$ for some universal constant $C$.} Indeed, \citet{giannopoulos} gave the asymptotic upper bound $\sup_{X \in \mathcal{B}_d}d_{BM}(X, \ell_\infty^d) \leq O(d^{5/6})$, which was later improved to the precise estimate $\sup_{X \in \mathcal{B}_d}d_{BM}(X, \ell_\infty^d) \leq 2d^{5/6}$ by \citet{youssef}. Thus, we arrive at the nontrivial lower bound on the competitive ratio of CBC over arbitrary Banach spaces
\begin{equation*}
    R(X) \geq \hat{R}(X) \geq \frac{1}{2} d^{1/6}, \en \forall X \in \mathcal{B}_d.
\end{equation*}

Furthermore, when dealing with $\ell_p$ norms, Propostion 37.6 from \citep{tomczak-jaegermann} states that for any dimension $d \in \N$ and $1 \leq p < 2 < q \leq \infty$, $d_{BM}(\ell_p^d, \ell_q^d) = \Theta(d^\a)$, where $\a = \max(\frac{1}{p} - \frac{1}{2}, \frac{1}{2} - \frac{1}{q})$. Therefore, setting $q = \infty$, we obtain $d_{BM} = \Theta(\sqrt{d})$. Thus, inequality (\ref{vector_distortion}) entails
\begin{equation*}
    R(\ell_p^d) \geq \hat{R}(\ell_p^d) \geq \O(\sqrt{d}), \en \forall p \in [1,2),
\end{equation*}
which concludes the proof of Theorem \ref{lower_bounds}.

\section{Metric-Dimension Preliminaries} \label{preliminaries}

Having proved that the vector-space dimension controls the hardness of CBC, we proceed to investigate the effect of metric covering dimensions on the competitive ratio of the ball chasing problem. However, before we jump into the proofs of our main theorems, we first need to set up some notation and preliminary observations.

Given a metric space $(X, d)$ and a subset $E \subseteq X$, we denote with $N_r(E)$ the smallest number of open balls in $X$ with radius at most $r$ that cover $E$.
\begin{definition}
For $\g > 1$, the \textbf{$\g-$cover constant} of a metric space $(X, d)$ is the smallest integer $\lambda_\g$ such that for every $R > 0$ and every $x \in X$, the open ball $B(x, R)$ can be covered by at most $\lambda_\g$ open balls of radius $R/\g$. In other words,
\begin{equation*}
    \lambda_\g = \sup_{x \in X, R > 0} N_{R/\g}(B(x, R)).
\end{equation*}
The \textbf{$\g-$cover dimension} of $X$ is then defined as $\rho_\g = \log_\g \lambda_\g$ and denoted as $\d_\g(X)$. Whenever $\d_\g(X) < \infty$, the metric space $X$ is said to be \textbf{doubling}.
\end{definition}

The above definition seemingly generalizes the doubling dimension ($\g = 2$) introduced in \citep{GKL}. However, the $\g-$cover dimension still suffers from the inherent arbitrariness of the choice of $\g$ and should be considered as a precise quantity only up to a constant factor. Indeed, notice that for $1 < \a < \b$,
\begin{equation*}
    \lambda_\a \leq \lambda_\b \leq \lambda_\a^{\ceil{\log_\a \b}},
\end{equation*}
so
\begin{equation*}
    \d_\a(X) \leq \d_\b(X) < (1 + \log_\b \a)\d_\a(X).
\end{equation*}

Intuitively, the $\d_\g(X)$ expresses the extent to which the size of the space increases as you travel further away and thus captures a relative measure of the dimension of the space. We make this intuition concrete through \Cref{doubling_vector} which relates the $\g-$cover dimension of $(R^d, \n{\cdot})$ to the vector-space dimension $d$. The proofs of all lemmas in this preliminaries section can be found in Appendix \ref{proofs}.\footnote{We make no claims about the novelty of the preliminary lemmas, and we note that \Cref{doubling_vector}, \Cref{doubling_assouad}, and \Cref{assouad_vector} can be found in \citep{heinonen} and \citep{matousek}.}

\begin{restatable}{lemma}{doublingVector} \label{doubling_vector}
Let $\g > 1$ and let $(X, d) = (\R^d, \n{\cdot})$. Then, $\d_\g(X) = \Theta_\g(d)$.
\end{restatable}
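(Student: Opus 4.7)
The plan is to prove both the upper and lower bounds on $\lambda_\g$ (and hence on $\rho_\g = \log_\g \lambda_\g$) by simple volume-comparison arguments in $(\R^d, \n{\cdot})$. The crucial property of a normed space that makes the approach work is that, letting $\mu$ denote Lebesgue measure on $\R^d$ and $B = B(0,1)$ the open unit ball of $\n{\cdot}$, translation invariance of $\mu$ and homogeneity of the norm give $\mu(B(x, R)) = R^d \mu(B)$ for every $x \in \R^d$ and $R > 0$, independently of the specific norm.

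For the lower bound $\rho_\g \geq d$, suppose $B(x, R)$ is covered by $N_{R/\g}(B(x,R))$ open balls of radius at most $R/\g$. Subadditivity of Lebesgue measure yields
\begin{equation*}
    R^d \mu(B) = \mu(B(x, R)) \leq N_{R/\g}(B(x, R)) \cdot (R/\g)^d \mu(B),
\end{equation*}
so $N_{R/\g}(B(x, R)) \geq \g^d$. Taking the supremum in the definition gives $\lambda_\g \geq \g^d$ and hence $\d_\g(X) \geq d$.

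For the upper bound, I would use a standard maximal-packing argument. Fix $x \in \R^d$ and $R > 0$, and let $\{x_i\}_{i \in I} \subseteq B(x, R)$ be a maximal subset with $\n{x_i - x_j} \geq R/\g$ for all $i \neq j$. By maximality, every point of $B(x, R)$ lies within distance $R/\g$ of some $x_i$, so the balls $\{B(x_i, R/\g)\}_{i \in I}$ cover $B(x, R)$. On the other hand, the balls $\{B(x_i, R/(2\g))\}_{i \in I}$ are pairwise disjoint and contained in $B(x, R + R/(2\g))$, so
\begin{equation*}
    |I| \cdot (R/(2\g))^d \mu(B) \leq \br{R + R/(2\g)}^d \mu(B),
\end{equation*}
yielding $|I| \leq (2\g + 1)^d$. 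Thus $\lambda_\g \leq (2\g + 1)^d$ and $\d_\g(X) \leq d \log_\g(2\g + 1) = O_\g(d)$.

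Combining the two bounds gives $d \leq \d_\g(X) \leq \log_\g(2\g+1) \cdot d$, which is $\Theta_\g(d)$. I do not anticipate a genuine obstacle here: the only subtle point is confirming that the volume ratio argument is valid uniformly over all norms on $\R^d$, but this reduces to the observation above that every $\n{\cdot}$-ball is a translate and scalar multiple of the unit ball $B$, so the ratios $\mu(B(x, R))/\mu(B(y, r)) = (R/r)^d$ are independent of the chosen norm.
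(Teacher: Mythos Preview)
Your proof is correct and essentially identical to the paper's: both use the volume ratio $\mu(B(x,R)) = R^d \mu(B)$ to get the lower bound $\lambda_\g \geq \g^d$, and a maximal $R/\g$-separated set (what the paper calls an $\eps r$-net with $\eps = 1/\g$) together with the disjointness of the half-radius balls inside $B(x, R + R/(2\g))$ to get the upper bound $\lambda_\g \leq (2\g+1)^d$. The resulting inequalities $d \leq \d_\g(X) \leq d\log_\g(2\g+1)$ match the paper's exactly.
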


Next, we demonstrate the robustness of the $\g-$cover dimension under the submetric relation -- a property shared with the ordinary vector-space dimension.

\begin{restatable}{lemma}{doublingSubmetric} \label{doubling_submetric}
Let $\g > 1$ and let $(Y,d|_Y)$ be a metric subspace of $(X,d)$. Then, $\d_\g(Y) \leq O_\g(\d_\g(X))$.
\end{restatable}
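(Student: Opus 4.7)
The plan is to iterate the covering property of $X$ enough times that a single application to a subspace-ball with ``recentering'' still produces balls of radius $R/\g$ rather than something larger. The obstacle to doing this in one step is that if we cover a ball $B_X(y, R)$ by $\lambda_\g$ balls $B_X(x_i, R/\g)$ and replace each intersecting $B_X(x_i, R/\g) \cap Y$ by a $Y$-ball centered at some $y_i \in B_X(x_i, R/\g) \cap Y$, the triangle inequality only gives containment in $B_Y(y_i, 2R/\g)$, which is too large by a factor of $2$.

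To fix this, I would pick the integer $k_\g = \ceil{1 + \log_\g 2}$, so that $2/\g^{k_\g} \leq 1/\g$, and iterate the $\g$-covering property of $X$ a total of $k_\g$ times. Concretely, any ball $B_X(y, R)$ is covered by $\lambda_\g$ balls of radius $R/\g$, each of which is in turn covered by $\lambda_\g$ balls of radius $R/\g^2$, and so on, yielding a covering of $B_X(y, R)$ by at most $\lambda_\g^{k_\g}$ $X$-balls of radius $R/\g^{k_\g}$.

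Now for each such ball $B_X(x_i, R/\g^{k_\g})$ that intersects $Y$, pick a representative $y_i \in B_X(x_i, R/\g^{k_\g}) \cap Y$. By the triangle inequality,
\begin{equation*}
    B_X(x_i, R/\g^{k_\g}) \cap Y \subseteq B_Y\!\br{y_i, 2R/\g^{k_\g}} \subseteq B_Y(y_i, R/\g).
\end{equation*}
Intersecting the whole covering with $Y$ shows that $B_Y(y, R) = B_X(y, R) \cap Y$ is covered by at most $\lambda_\g^{k_\g}$ open $Y$-balls of radius $R/\g$, so the $\g$-cover constant of $Y$ satisfies $\lambda_\g(Y) \leq \lambda_\g(X)^{k_\g}$.

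Taking $\log_\g$ yields $\d_\g(Y) \leq k_\g \cdot \d_\g(X)$. Since $k_\g$ depends only on $\g$, this is exactly the bound $\d_\g(Y) \leq O_\g(\d_\g(X))$. The only slightly delicate point is making sure the iteration is done with centers chosen freely in $X$ (not in $Y$), which is fine because the hypothesis is on the ambient space $X$; the passage to $Y$-centered balls happens only once at the end, after the radius has been shrunk enough to absorb the factor of $2$.
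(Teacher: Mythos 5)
Your proof is correct and takes essentially the same route as the paper's: iterate the $\g$-covering of $X$ about $\ceil{\log_\g(2\g)}$ times so that the doubling of the radius incurred by recentering on a point of $Y$ still lands inside $R/\g$, yielding $\lambda_\g(Y)\le\lambda_\g(X)^{k_\g}$ and hence $\d_\g(Y)\le k_\g\,\d_\g(X)$. Your write-up is if anything a bit cleaner, since the paper's displayed containment $B_X(x', r/(2\g)) \subseteq B_X(y', r/2)$ should read $B_X(y', r/\g)$; your version states the triangle-inequality step correctly.
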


As a final property of the $\g-$cover dimension that will be useful in our proofs, we show an upper bound on the $\g-$cover dimension of the union of two distant metric subspaces.

\begin{restatable}{lemma}{doublingUnion} \label{doubling_union}
Let $\g > 1$ and let $(Z, d)$ be a metric space such that $Z = X \cup Y$. If $d(X, Y) \geq 2\g \max\{\texttt{diam}(X), \texttt{diam}(Y)\}$, then $\d_\g(Z) = \max\{\d_\g(X), \d_\g(Y)\}$.
\end{restatable}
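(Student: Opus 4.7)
The plan is to work directly with the covering constants and prove the sharper statement $\lambda_\g(Z) = \max\{\lambda_\g(X),\lambda_\g(Y)\}$, after which the claim on $\d_\g$ follows by taking $\log_\g$. Write $D_X := \texttt{diam}(X)$, $D_Y := \texttt{diam}(Y)$, and $s := d(X, Y) \geq 2\g \max\{D_X, D_Y\}$. The hypothesis is engineered so that open $Z$-balls of radius at most $s$ cannot straddle both pieces, while $Z$-balls of radius larger than $s$ are big enough that $X$ and $Y$ each fit inside a single ball of radius $R/\g$.

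For the upper bound, I would pick an arbitrary $B_Z(z, R)$ and WLOG take $z \in X$. If $R \leq s$, the ball is contained in $X$ and equals $B_X(z, R)$; a minimal cover of $B_X(z, R)$ by $\lambda_\g(X)$ open $X$-balls of radius $R/\g$ lifts verbatim to $Z$, since each such ball has radius $R/\g < s$ and center in $X$, hence coincides with its $Z$-counterpart. If instead $R > s$, then $R/\g > 2\max\{D_X, D_Y\}$ strictly exceeds both diameters, so $B_Z(z, R/\g)$ already engulfs all of $X$; and if $B_Z(z, R)$ meets $Y$, one additional ball centered at any $y_0 \in Y \cap B_Z(z, R)$ with the same radius engulfs $Y$. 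Two balls suffice here, which is at most $\max\{\lambda_\g(X), \lambda_\g(Y)\}$ as soon as either piece has more than one point.

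For the lower bound, I would note that if $\lambda_\g(X) \geq 2$ then any witnessing radius satisfies $R_0 \leq \g D_X$, since for $R_0 > \g D_X$ a single ball of radius $R_0/\g > D_X$ trivially covers $X$. Hence $R_0 \leq \g D_X \leq s/2 < s$, so $B_Z(x_0, R_0) = B_X(x_0, R_0)$. Any covering of this ball by open $Z$-balls of radius $R_0/\g < s$ is forced to use only balls centered in $X$, because a $Y$-centered ball of radius $< s$ misses $X$ altogether, and such $Z$-balls are literally $X$-balls; thus at least $\lambda_\g(X)$ of them are required. The symmetric argument yields $\lambda_\g(Z) \geq \lambda_\g(Y)$.

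The main obstacle will be careful book-keeping with the strict open-ball inequalities; the factor $2\g$ in the separation hypothesis is precisely what provides enough slack in both the $R \leq s$ case (to keep $R/\g < s$) and the $R > s$ case (to keep $R/\g$ strictly greater than $D_X$ and $D_Y$). A harmless degeneracy arises when both $X$ and $Y$ are singletons: then $\lambda_\g(Z) = 2$ while $\max\{\lambda_\g(X), \lambda_\g(Y)\} = 1$. This can be absorbed into a $+\log_\g 2$ additive constant in the dimension or excluded outright, and is inconsequential for the downstream applications.
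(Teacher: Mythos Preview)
Your argument follows the same two-case split as the paper: either the ball sits inside one piece (and that piece's $\g$-cover constant applies directly), or it straddles both pieces (forcing $R > d(X,Y)/2$, so $R/\g$ exceeds both diameters and two balls suffice). Your write-up is in fact more complete than the paper's own proof---you supply the lower-bound direction $\lambda_\g(Z)\ge\max\{\lambda_\g(X),\lambda_\g(Y)\}$ and flag the two-singleton degeneracy, both of which the paper silently omits.
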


Now, we introduce a more precise metric dimension called the Assouad dimension \citep{assouad} or the (metric) covering dimension.\footnote{We invite the reader to consider Chapter 10 of \citep{heinonen}, Chapter 3 of \citep{matousek}, and \citep{fraser} for an in-depth discussion of doubling metric spaces and the properties of the Assouad dimension.}
\begin{definition}
    The \textbf{Assouad dimension} of a metric space $(X, d)$ is defined as the infimal $\rho_A \geq 0$ for which there exists a constant $C > 0$ such that for all radii $0 < r < R$ and all $x \in X$,
    \begin{equation*}
        N_r(B(x, R)) \leq C \left( \frac{R}{r} \right)^{\rho_A}.
    \end{equation*}
    We will denote the Assouad dimension of $X$ with $\d_A(X)$.
\end{definition}

Notice that finite metric spaces have Assouad dimension 0. Also, not surprisingly:

\begin{restatable}{lemma}{doublingAssouad} \label{doubling_assouad}
    The metric spaces with finite Assouad dimension are precisely the doubling metric spaces. Moreover, whenever $(X, d)$ is doubling and $\g > 1$, $\d_A(X) \leq \d_\g(X)$.
\end{restatable}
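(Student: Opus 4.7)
The plan is to establish the quantitative inequality $\d_A(X) \leq \d_\gamma(X)$ first, which immediately gives one direction of the equivalence (doubling implies finite Assouad dimension). The reverse direction follows from a one-line specialization of the Assouad covering bound.

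For the quantitative bound, the strategy is to iterate the $\gamma$-cover constant. By definition, any ball $B(x,R)$ can be covered by $\lambda_\gamma$ balls of radius $R/\gamma$. Applying the same property to each of those smaller balls and continuing inductively, after $k$ rounds we can cover $B(x,R)$ with at most $\lambda_\gamma^k$ balls of radius $R/\gamma^k$. Given $0 < r < R$, I would choose $k = \lceil \log_\gamma(R/r) \rceil$, so that $R/\gamma^k \leq r$, and conclude
\begin{equation*}
    N_r(B(x,R)) \leq \lambda_\gamma^k \leq \lambda_\gamma^{\log_\gamma(R/r) + 1} = \lambda_\gamma \left( \frac{R}{r} \right)^{\log_\gamma \lambda_\gamma} = \lambda_\gamma \left( \frac{R}{r} \right)^{\d_\gamma(X)}.
\end{equation*}
Since this bound holds uniformly in $x$, $r$, and $R$ with the explicit constant $C = \lambda_\gamma$, the infimum in the definition of the Assouad dimension satisfies $\d_A(X) \leq \d_\gamma(X)$. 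In particular, whenever $X$ is doubling (i.e., $\d_\gamma(X) < \infty$ for some, equivalently every, $\gamma > 1$), its Assouad dimension is finite.

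For the converse, suppose $\d_A(X) < \infty$ with constant $C$ and exponent $\rho_A$ in the Assouad covering inequality. Specializing to $r = R/\gamma$ for any fixed $\gamma > 1$ gives $N_{R/\gamma}(B(x,R)) \leq C\gamma^{\rho_A}$ uniformly in $x$ and $R$, so $\lambda_\gamma \leq C\gamma^{\rho_A} < \infty$ and $X$ is doubling. Combined with the first part, this yields the claimed equivalence.

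There is no serious obstacle here: the argument is a textbook iteration of the doubling property, with the only minor care required in handling the ceiling in the choice of $k$ so that the resulting constant $\lambda_\gamma$ is explicit and dimension-independent. I would present the iteration step as a short induction (or simply appeal to the product structure of coverings) to keep the exposition tight.
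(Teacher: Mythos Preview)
Your proposal is correct and follows essentially the same approach as the paper: iterate the $\gamma$-cover property $\lceil \log_\gamma(R/r) \rceil$ times to obtain the Assouad bound with constant $C = \lambda_\gamma$, and specialize the Assouad inequality to $r = R/\gamma$ for the converse. The only cosmetic difference is that the paper splits the iteration into two cases ($R/r < \gamma$ versus $R/r \geq \gamma$) rather than handling both uniformly via the ceiling, as you do.
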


Furthermore, the Assouad dimension perfectly agrees with the vector-space dimension.

\begin{restatable}{lemma}{assouadVector} \label{assouad_vector}
    Let $(X, d) = (\R^d, \n{\cdot})$. Then, $\d_A(X) = d$.
\end{restatable}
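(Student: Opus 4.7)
The plan is to establish matching upper and lower bounds of $d$ via a standard volume argument, leveraging the fact that Lebesgue measure on $\R^d$ transforms predictably under scaling and translation, while any norm ball is a bounded, convex, symmetric body with strictly positive Lebesgue measure. Let $V = \text{vol}(B(0, 1))$ where $B$ denotes the open unit ball with respect to $\n{\cdot}$; by translation invariance and the homogeneity $\text{vol}(\rho A) = \rho^d \text{vol}(A)$, every open ball $B(x, \rho)$ has volume $\rho^d V$.

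For the upper bound $\d_A(X) \leq d$, I would fix $x \in \R^d$ and $0 < r < R$, and take a maximal $r$-separated subset $S \subseteq B(x, R)$. Then the open balls $\{B(s, r/2)\}_{s \in S}$ are pairwise disjoint and contained in $B(x, R + r/2)$, so comparing volumes yields $|S| (r/2)^d V \leq (R + r/2)^d V$, which gives $|S| \leq (1 + 2R/r)^d \leq (3R/r)^d$. By maximality, the balls $\{B(s, r)\}_{s \in S}$ cover $B(x, R)$, so $N_r(B(x, R)) \leq 3^d (R/r)^d$. With $C = 3^d$, this shows $\d_A(X) \leq d$.

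For the lower bound $\d_A(X) \geq d$, I would argue by contradiction: suppose $\d_A(X) < d$, so that there exist $\rho_A < d$ and $C > 0$ with $N_r(B(x, R)) \leq C (R/r)^{\rho_A}$ for all $0 < r < R$ and all $x \in X$. Now for any cover of $B(x, R)$ by $N$ open balls of radius at most $r$, subadditivity of Lebesgue measure gives $R^d V = \text{vol}(B(x, R)) \leq N \cdot r^d V$, so $N \geq (R/r)^d$. Combining, $(R/r)^d \leq C (R/r)^{\rho_A}$, i.e., $(R/r)^{d - \rho_A} \leq C$ for every ratio $R/r > 1$. Letting $R/r \to \infty$ contradicts $d > \rho_A$, which completes the lower bound.

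The argument has no real obstacle beyond being careful with the two directions: the upper bound needs that a maximal $r$-separated set always exists inside $B(x, R)$ (which follows from Zorn's lemma or a direct greedy construction) and that it is automatically $r$-covering; the lower bound only needs nontrivial Lebesgue measure for norm balls. Both facts hold for an arbitrary norm on $\R^d$ since all such norm balls are non-degenerate convex bodies and the norm induces the usual topology, so the bounds hold uniformly without any dependence on the specific norm.
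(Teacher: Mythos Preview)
Your proposal is correct and follows essentially the same approach as the paper: the paper's proof simply cites the two-sided estimate $(R/r)^d \leq N_r(B(x,R)) \leq (1+2R/r)^d < 3^d (R/r)^d$ established in the proof of \Cref{doubling_vector}, and that estimate is obtained there by exactly the volume/packing argument you give (a maximal $r$-separated set for the upper bound, subadditivity of Lebesgue measure for the lower bound). Your contradiction step for the lower bound on $\d_A$ is the natural unpacking of what the paper leaves implicit.
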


We conclude the preliminaries section with an intermediate-value theorem for the Assouad dimension that we will use in the proof of Theorem \ref{assouad}.

\begin{theorem} \label{ivp}
    Let $(X, d)$ be a doubling metric space.
    \begin{itemize}
        \item \citet{CWW} proved that for any $\rho \in [0, \d_A(X)]$, there exists a metric subspace $(Y, d|_Y) \subseteq (X, d)$ such that $\d_A(Y) = \rho$.
        \item \citet{WW} proved that when $X$ is bounded and $\rho \in [0, \d_A(X))$, then there exists a countable metric subspace $(Y, d|_Y) \subseteq (X, d)$ such that $\d_A(Y) = \rho$.
    \end{itemize}
\end{theorem}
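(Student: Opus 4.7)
Since Theorem \ref{ivp} is attributed to \citet{CWW} and \citet{WW}, my plan is to outline the construction one would attempt in order to reproduce these results rather than carry out a detailed proof. The common starting point is that for a doubling metric space $X$ of Assouad dimension $D = \d_A(X)$, the definition supplies, for every $\eps > 0$, arbitrarily small pairs of scales $r < R$ and centers $x$ with $N_r(B(x,R)) \geq c(R/r)^{D - \eps}$. These ``fat'' witness packings are the raw material out of which the desired subsets of $X$ are assembled.

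For the first bullet, I would construct $Y \subseteq X$ as a nested Cantor-type subset. Fix a geometric sequence of scales $r_n = \a^n$ for some $\a \in (0, 1)$, and at stage $n$ thin a maximal $r_n$-separated subfamily of a fat witness ball down to approximately $\lfloor (r_{n-1}/r_n)^\rho \rfloor$ elements; these retained centers then become the roots of the deeper witness balls chosen at scale $r_{n+1}$. The upper bound $\d_A(Y) \leq \rho$ is immediate from the prescribed branching factor, while the lower bound $\d_A(Y) \geq \rho$ follows because every sufficiently small ball of $Y$ inherits, by construction, a well-separated $\rho$-dimensional sub-packing.

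For the second bullet, the hypothesis of boundedness ensures that the first-part construction can be executed inside a single bounded region of $X$. One then passes to a countable dense subfamily $Y' \subseteq Y$, which preserves the Assouad dimension because the covering numbers $N_r(Y)$ and $N_{2r}(Y')$ differ by at most a universal multiplicative factor. The strict bound $\rho < \d_A(X)$ reflects the $\eps$-slack that the inductive construction needs in order to locate a sufficiently fat witness at every scale inside $X$; attaining $\rho = \d_A(X)$ via a countable subset would require an additional diagonal/limiting step.

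The main technical obstacle is the lower bound on $\d_A(Y)$: the upper bound is essentially routine once the branching is fixed, but certifying that the thinning has preserved a genuine $\rho$-dimensional packing at every small scale requires a careful scale-by-scale alignment between the retained centers and the location of the fat witness balls of $X$. This alignment is where the doubling hypothesis and the worst-case attainment of $D$ jointly enter, and it is where the bulk of the technical work in \citep{CWW} and \citep{WW} concentrates.
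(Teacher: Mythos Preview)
The paper does not prove Theorem~\ref{ivp} at all: it is stated purely as a citation of \citet{CWW} and \citet{WW} and used as a black box in the proof of Theorem~\ref{assouad}. There is therefore no ``paper's own proof'' to compare your proposal against. You correctly recognized this and chose to sketch the construction rather than claim a self-contained argument, which is the appropriate response here.

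As a brief comment on the sketch itself: the Cantor-type thinning you describe is indeed the standard template for these intermediate-value results, and your identification of the lower bound on $\d_A(Y)$ as the main technical difficulty is accurate. One small quibble concerns the second bullet: the passage ``take a countable dense subfamily $Y' \subseteq Y$'' is slightly misleading, since the natural output of the staged construction is already countable (the union of the finite center sets at each stage), so no separate density argument is needed. The genuine content of the strict inequality $\rho < \d_A(X)$ in \citet{WW} is, as you note, the $\eps$-slack required to guarantee fat witnesses at every scale, not the countability per se.
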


\section{Arbitrary Bad Competitive Ratio in Nested Ball Chasing}

\doubling*

\begin{proof}
We are given as input $\rho > 0$ and $N > 0$, and we want to construct a metric space $(X,d)$ with $\d_\g(X) = \Theta_\g(\rho)$ such that $\hat{R}(X) \geq N$.

\paragraph{\textbf{Defining the Metric Space.}}

The metric space $(X,d)$ will consist of two parts: an $n-$point set $P = \{p_1, p_2, \dots, p_n\}$ of "nearby" points and a set of $2^n - 2$ "faraway" points $Z = \{z_S : S \subseteq [n], \en S \neq \es, [n]\}$ that the adversary will use for ball centers. The hard chasing instances will be designed in such a way that the online player starts at a point in $P$ and is forced to traverse the whole of $P$ before the nested request sequence terminates. The points in $Z$ will be placed so far away from $P$ that crossing over to them will immediately inflict a competitive ratio worse than $N$. Thus, the strategy of the online adversary will be to adaptively construct a sequence of nested balls $B_1 \supset \dots \supset B_{n-1}$ such that at time $t \geq 1$, $B_t \setminus B_{t-1} = x_{t-1}$ -- the point currently occupied by the online player. This way, the online player will necessarily travel a distance of at least $(|P|-1)d_{\min}(P)$, while the optimal path $d(x_0, x_n)$ would have length at most $d_{\max}(P)$. Hence, no deterministic online algorithm can achieve competitive ratio better than $(|P|-1)/\Delta(P)$, where $\Delta(P) = d_{\max}(P)/ d_{\min}(P)$ is the aspect ratio of the metric subspace $P$. Therefore, we now have the concrete goal of designing a metric space $(X, d)$ where the $\g-$cover dimension has no control over the size of $(|P|-1)/\Delta(P)$. We proceed with the description of $(X,d)$.

\begin{figure}[ht] \label{fig: finite}
    \centering
    \includegraphics[scale=0.6]{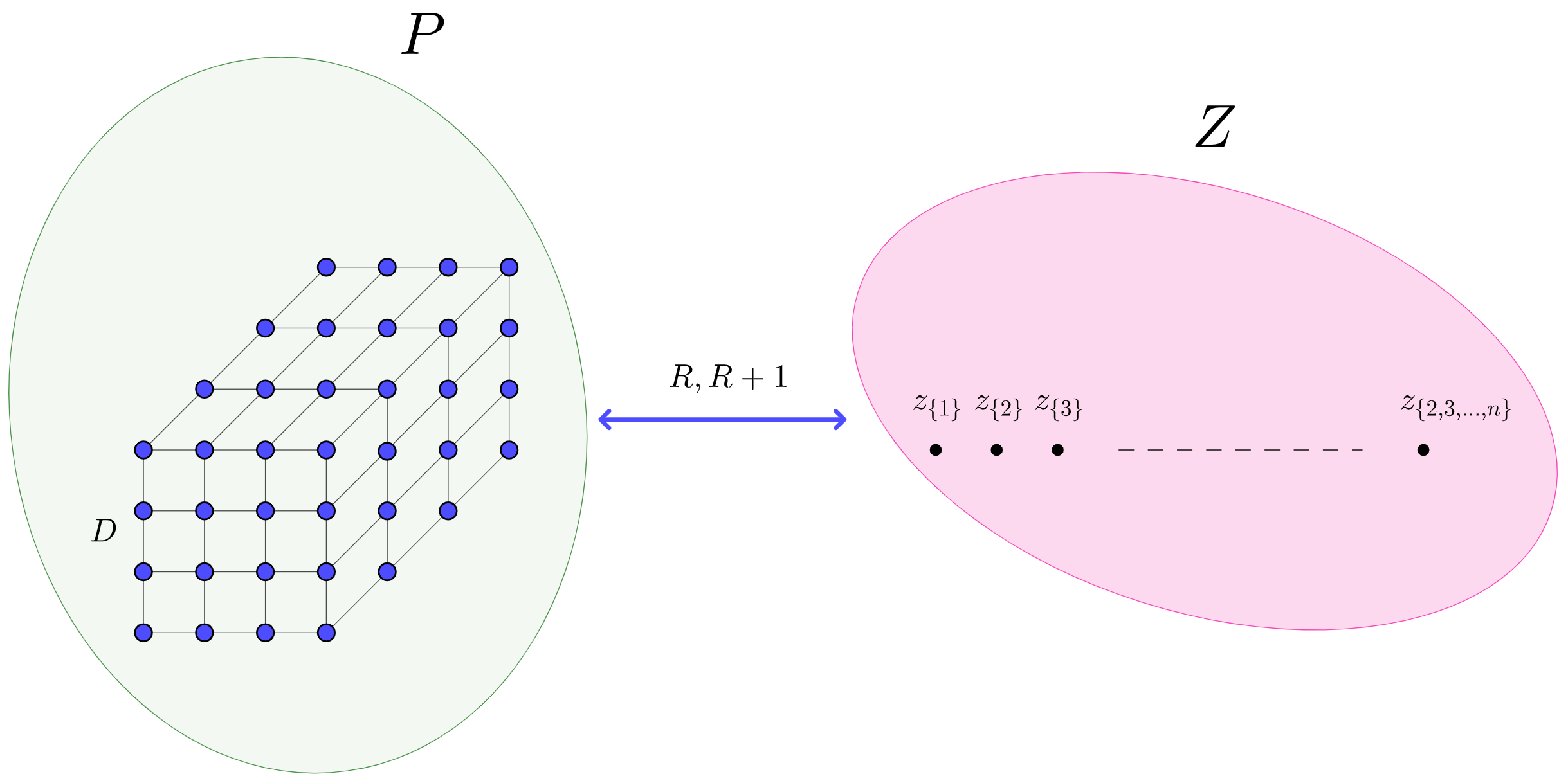}
    \caption{$(X,d)$ with $\d_\g(x) = \Theta_\g(\rho)$ and $\hat{R}(X) \geq N$.}
\end{figure}

Let $P \subseteq \R^k$ be a $k-$dimensional grid with unit distance between adjacent points and $D$ points on each side of the grid. Hence, $n = D^k$. The grid $P$ will be endowed with the $\ell_\infty$ metric, so $d(p_i,p_j) \in \{1,2, \dots, D-1\}, \en \forall i \neq j \in [n]$. Next, we specify the ball-centers set $Z$. For each $S \subseteq [n]: \en S \neq \es, [n]$, we place a point $z_S$ on a line (in arbitrary order) at unit distance from the neighboring points. Thus, the distance between the first and the last element on the ball-centers line becomes $2^n - 3$. Then, for each $j \in [n]$ and $S \subseteq [n]: \en S \neq \es, [n]$, we let
\begin{equation*}
    d(z_S, p_j) =
    \begin{cases}
        R + 1 & \text{if } j \in S;\\
        R & \text{otherwise},
    \end{cases}
\end{equation*}
where we set $R = \g 2^{n+1} > 2\g\max(\texttt{diam}(P), \texttt{diam}(Z))$.

\paragraph{\bf Proof of Valid Metric.}

Clearly, our metric, as defined, satisfies symmetry and positivity, so we only need to check that the triangle inequality holds. Since the triangle inequality holds separately within $P$ and $Z$, we just need to convince ourselves that no triangle with vertices both in $P$ and $Z$ violates the triangle inequality.

\paragraph{\bf Nested Sequence of Balls.}

Since $R > \texttt{diam}(Z)$, we get that $Z \subseteq B(z_S, R)$, and thus
\begin{equation*}
    B(z_S, R) = Z \cup \{x_j : j \notin S\}.
\end{equation*}
Now, let $S_t$ be the set of indices of points in $P$ that the online player has visited prior to the ball request at time $t \geq 1$. Then, the adaptive adversary can pick $B(z_{S_t}, R)$ and force the online player to move to another point in $P$ or $Z$. If the online player decides to jump to $Z$, the request sequence terminates. Otherwise, the adversary continues to request balls of the form $B(z_{S_{t+1}}, R) =  B(z_{S_t}, R) \setminus \{x_t\}$ until all points in $P$ have been exhausted.

\paragraph{\bf Bounding the Doubling Dimension.}

We proceed with the bounding of the $\g-$cover dimension of $X$. Clearly, the line $(Z, d|_Z)$ can be isometrically embedded into $(\R, |\cdot|)$, so by Lemma \ref{doubling_submetric}, $\d_\g(Z, d) \leq O_\g (\d_\g(\R, |\cdot|))$. Now, Lemma \ref{doubling_vector} tells us that $\d_\g(\R, |\cdot|) = \Theta_\g(1)$, so we conclude that $\d_\g(Z, d) \leq O_\g(1)$. Again, by Lemma \ref{doubling_vector}, $\d_\g(\R^k, \n{\cdot}_\infty) = \Theta_\g(k)$, so Lemma \ref{doubling_submetric} yields $\d_\g(P) \leq O_\g(\d_\g(k))$. Therefore, Lemma \ref{doubling_union} helps us establish that $\d_\g(X) \leq O_\g(k)$.

On the other hand, if we consider a point $x$ at the interior of the grid $P$, the ball $B(x, \frac{2\g-1}{\g})$ will contain $3^k$ points, but every ball of radius $\frac{2\g-1}{\g^2}$ will contain precisely a single point. Hence, $\lambda_\g(P) \geq 3^k$, which implies that $\d_\g(P) \geq \O_\g(k)$. Now, Lemma \ref{doubling_union} entails $\d_\g(X) \geq \O_\g(k)$, so overall $\d_\g(X) = \Theta_\g(k)$. Thus, we let $k = \ceil{\rho} + 1$, which concludes the proof that $\d_\g(X) = \Theta_\g(\rho)$.

\paragraph{\bf Bounding the Competitive Ratio.}

As a last step in the proof, we lower-bound the competitive ratio $\hat{R}(X)$. Note that due to the design of the nested sequence of requests, the optimal traveled distance never exceeds $d_{\max}(P) = D - 1$. Thus, if the online player decides to jump to $Z$ at some point in time, the competitive ratio of the chasing sequence becomes greater than $R/D >> (|P| - 1)/\Delta(P)$. Otherwise, as we discussed in the beginning of the section, the competitive ratio becomes at least $(|P| - 1)/\Delta(P) > D^{k-1} > D^\rho$. Now, since $D$ represents a free parameter, we can pick $D$ large enough so that $\hat{R}(X) > D^\rho \geq N$.

\end{proof}

\begin{remark}
Exactly the same construction shows that for any $N > 0$ and $\rho \geq 1$, there exists a metric space $(X,d)$ of KR-dimension\footnote{See \citep{KR} for the precise definition and origin of the KR-dimension.} $\Theta(\rho)$ such that $\hat{R}(X) \geq N$. 
\end{remark}

\section{Greedy Upper Bound on $\hat{R}(X)$ for Nested Ball Chasing}

In this section, we show a simple greedy algorithm for the NBC problem over arbitrary metric spaces $(X, d)$ (without doubling assumptions), which matches up to a constant factor the lower bound on $\hat{R}(X)$ that we proved in the doubling setting. Indeed, given a starting point $x_0$ and a nested request sequence $\mathcal{S} = \{B_1, \dots, B_T\}$, let us define $P_{x_0,\mathcal{S}} = B(x_0, d^*)$ as the \textbf{problem-space set}, where $d^* = d(x_0, B_T)$ stands for the optimal traveled distance. With this notation, the proof of Theorem \ref{doubling} demonstrated that there exists a metric space $(X, d)$ with fixed $\g-$cover dimension such that
\begin{equation*}
    \hat{R}(X) \geq \sup_{x_0 \in X, \mathcal{S} - \text{ nested ball sequence}} \frac{|P_{x_0,\mathcal{S}}|-1}{\Delta(P_{x_0,\mathcal{S}})}.
\end{equation*}
The theorem below shows once more that contrary to the situation in the vector-space setting, a fixed $\g-$cover dimension does not make the NBC problem easier over metric spaces.

\begin{theorem} \label{greedy}
    Let $(X, d)$ be an arbitrary metric space on which we consider the nested ball chasing problem. Then, with the above notation, $$\hat{R}(X) \leq \sup_{x_0 \in X, \mathcal{S} - \text{ nested ball sequence}} 2|P_{x_0,\mathcal{S}}|-1.$$
\end{theorem}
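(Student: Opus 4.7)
The plan is to exhibit a simple lazy \textit{center-pull} greedy algorithm and show that on every instance $(x_0, \mathcal{S})$ it pays at most $(2|P_{x_0,\mathcal{S}}|-1)\,d^*$. The algorithm I would use is: at each time $t$, if $x_{t-1} \in B_t$, keep $x_t = x_{t-1}$; otherwise, pick $x_t \in \argmin_{y \in B_t} d(x_0, y)$. The existence of a minimizer can be guaranteed by a standard $\eps$-minimizer argument when $B_t$ is not compact.

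The first step will be to establish two invariants of the greedy's trajectory. Invariant (i): $d(x_0, x_t) \leq d^*$ for every $t$. This follows because nesting gives $B_T \subseteq B_t$, whence $d(x_0, B_t) \leq d(x_0, B_T) = d^*$, and the greedy moves to a point attaining this minimum. Invariant (ii): the greedy never revisits a previously abandoned position. Indeed, if the greedy moves at time $t$, then $x_{t-1} \notin B_t$; nesting then forces $x_{t-1} \notin B_s$ for every $s \geq t$, so the greedy cannot land on $x_{t-1}$ thereafter.

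Combining the two invariants, the distinct positions ever visited by the greedy (including $x_0$) all lie in $\bar B(x_0, d^*)$, giving, up to boundary treatment, at most $|P_{x_0,\mathcal{S}}|$ distinct moves. By invariant (i), the inaugural move out of $x_0$ costs at most $d^*$, while every subsequent move connects two points each within distance $d^*$ of $x_0$ and therefore costs at most $2d^*$ by the triangle inequality. Summing,
\begin{equation*}
    \c_T(x_1, \dots, x_T) \leq d^* + 2 d^*\,(|P_{x_0,\mathcal{S}}| - 1) = (2|P_{x_0,\mathcal{S}}| - 1)\,d^*,
\end{equation*}
and dividing by $\o_T = d^*$ and taking the supremum over $(x_0, \mathcal{S})$ completes the argument.

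The main subtlety I expect lies in the careful counting of distinct greedy positions against $|P_{x_0,\mathcal{S}}| = |B(x_0, d^*)|$ (an open ball). A naive use of invariant (ii) only gives $|\bar B(x_0, d^*)|$ distinct moves, which could exceed $|P_{x_0,\mathcal{S}}|$ when the sphere $\{y : d(x_0, y) = d^*\}$ carries more than one point. This can be handled either by breaking ties in the argmin to favor interior points or by a small perturbation $d^* \rightsquigarrow d^* + \eps$ followed by a passage to the limit; once this point is resolved, the remainder of the proof is a routine triangle-inequality computation.
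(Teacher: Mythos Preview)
Your proposal is correct and follows essentially the same route as the paper: the paper uses the identical lazy greedy algorithm $x_t \in \argmin_{x\in B_t} d(x,x_0)$, the same observation $d(x_0,x_t)\le d^*$, the same triangle-inequality bound $\sum_t d(x_{t-1},x_t)\le (2T-1)d^*$, and the same eviction argument (each genuine move kicks $x_{t-1}$ out of all future balls) to get $T\le |P_{x_0,\mathcal S}|$. The paper in fact glosses over the open-versus-closed ball subtlety you flag, so your treatment is, if anything, slightly more careful.
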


\begin{proof}
Let us consider a greedy online algorithm which starts at $x_0$ and responds with $x_t = \argmin_{x \in B_t} d(x,x_0)$ -- the closest point to $x_0$ in $B_t$. WLOG, assume $x_{t-1} \notin B_t$. (Otherwise, the algorithm simply skips the step without incurring any penalty.) Notice that $d^* = d(x_0, B_T) = d(x_0, x_T) \geq d(x_0, x_t)$ for $t \in [T]$. Hence, using the triangle inequality, we bound the online cost as follows:
\begin{equation*}
    d(x_0,x_1) + \dots + d(x_{T-1}, x_T) \leq d(x_0, x_T) + 2\sum_{t=1}^{T-1} d(x_0, x_t) \leq (2T-1) d^*.
\end{equation*}

Thus, the greedy algorithm achieves a competitive ratio of at most $2T-1$. Furthermore, we know that $T$ must be bounded by $|B(x_0, d^*)| = |P_{x_0,\mathcal{S}}|$ since the assumption $x_{t-1} \notin B_t$ guarantees that each step evicts an element of $B(x_0, d^*)$.

\end{proof}

\section{Infinite Competitive Ratio for Any Fixed Assouad Dimension}

We will now show that enforcing a finite Assouad dimension on a metric space $(X, d)$ proves to be too liberal of a constraint for the NBC problem to be chaseable. Hence, the question of finding a condition which allows for meaningful ball chasing in the absence of linear structure remains open.

\assouad*

\begin{proof}

We want to build a metric space $(X, d)$ with $\d_A(X) = \rho$ and $\hat{R}(X) = \infty$. The construction of that space will be carried out in two phases. The first phase will use induction to glue together countably many copies of the metric space we designed for Theorem \ref{doubling}. This gluing procedure will ensure that $\hat{R}(X) = \infty$. Then, in phase two, we will attach the result from phase one to a subset of a normed vector space of proper dimension, which will guarantee that $\d_A(X) = \rho$. Let us begin.

\paragraph{Phase I.}
Let us fix some $\g > 1$. For every $N \in \N$, Theorem \ref{doubling}, provides the existence of a metric space $(X_N, d_N)$ with NBC competitive ratio $\hat{R}(X_N) \geq N$ and $\g-$cover dimension $\d_\g(X_N) = \Theta_\g(1) = c$, where $c$ represents the absolute constant in the statement of Theorem \ref{assouad}. We set $(Y_1, D_1) = (X_1, d_1)$, and for $N \geq 2$, we construct the metric spaces $(Y_N, D_N)$ iteratively by gluing together $(X_N, d_N)$ and $(Y_{N-1}, D_{N-1})$ in the following way:
\begin{itemize}
    \item The distances between pairs of points both inside either $X_N$ or $Y_{N-1}$ remain unchanged.
    \item $\forall x \in X_N, \en \forall y \in Y_{N-1}$, we let $D_N(x, y) = R_N = 2\g \max(\texttt{diam}(X_N), \texttt{diam}(Y_{N-1}))$.
\end{itemize}
Clearly, the triangle inequality holds inside every so-constructed space $(Y_N, D_N)$. Moreover, a simple inductive argument in conjunction with Lemma \ref{doubling_union} yields $\d_\g(Y_N) = c$. Finally, since the competitive ratio $\hat{R}(\cdot)$ is monotonically increasing under the superset operation, we get obtain $\hat{R}(Y_N) \geq N$. Therefore, the following two identities hold for the metric space $(Y, D) = \bigcup_{N = 1}^\infty (Y_N, D_N)$:
\begin{itemize}
    \item $\d_\g(Y) = c$;
    \item $\hat{R}(Y) = \infty$.
\end{itemize}

\begin{figure}[h!] \label{fig: infinite}
    \centering
    \includegraphics[scale=0.45]{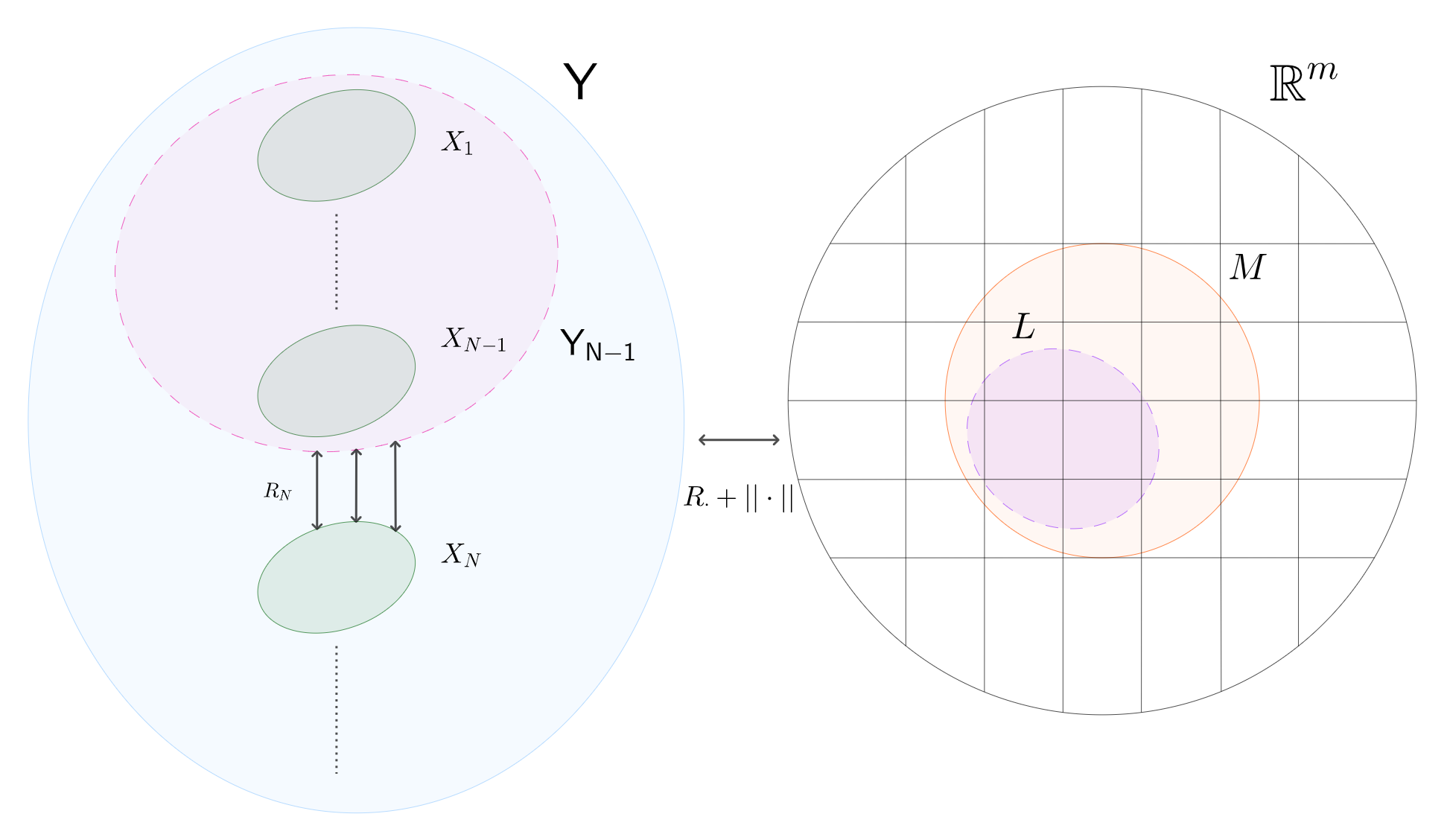}
    \caption{$(X,d)$ with $\d_\A(x) = \rho$ and $\hat{R}(X) = \infty$.}
\end{figure}

\paragraph{Phase II.}

Let us fix some $m \in \N$ such that $m > \rho \geq c$. We construct the metric space $(Z, d)$ by gluing $(Y, D)$ and some arbitrary normed space $(\R^m, \n{\cdot})$ together in the following way:
\begin{itemize}
    \item The distances between pairs of points both inside either $Y$ or $\R^m$ remain unchanged.
    \item $\forall N \in \N, \en \forall x \in X_N, \en \forall y \in \R^m$, we let $d(x, y) = R_N + \n{y}$.
\end{itemize}
We only need to check that the triangle inequality holds between the following triplets:
\begin{itemize}
    \item $M > N$, $x \in X_M$, $y \in X_N$, $z \in \R^m$: $d(x, y) = R_M$, $d(y, z) = R_N + \n{z}$, $d(z, x) = R_M + \n{z}$;
    \item $x, y \in X_N$, $z \in \R^m$: $d(x, y) < R_N$, $d(y, z) = R_N + \n{z}$, $d(z, x) = R_N + \n{z}$;
    \item $x \in X_N$, $y, z \in \R^m$: $d(x, y) = R_N + \n{y}$, $d(y, z) = \n{y-z}$, $d(z, x) = R_N + \n{z}$.
\end{itemize}
Hence, $(Z, d)$ is a well-defined metric space.

Now, let $M = B(0, r) \subset Z$ be the open ball of radius $r > 0$ around the origin of $\R^m$, and notice that when $r < R_1$, then $M \subset \R^m$. From the proof of Lemma \ref{assouad_vector}, we know that $\d_A(M) = m$, and from Theorem \ref{ivp}, we can find a countable subset $L$ of $M$ such that $\d_A(L) = \rho$. We can finally define the countable metric space promised in the beginning of the proof. Let $X = Y \cup L$ and let $X$ inherit the metric distance $d$ from $Z$. Since $\dim_A(L) = \rho$ and since there exists an open ball in $X$ which contains precisely $L$, we conclude that $\d_A(X) \geq \rho$.

In order to prove that $\d_A(X) \leq \rho$, let us consider an open ball $B(x, R)$ for some $x \in X$ and let $R > r> 0$. If $B(x, R) \subseteq L$, then there exists a constant $C_L$ such that $N_r(B(x, R)) \leq C_L \left( \frac{R}{r} \right)^\rho$. On the other hand, if there exists $y \in B(x, R) \cap Y$, then $B(x, R) \subseteq B(y, 2R)$. Hence, by Lemma \ref{doubling_assouad}, there exists a constant $C_Y$ such that
\begin{equation*}
    N_r(B(x, R)) \leq N_r(B(y, 2R)) \leq C_Y \left( \frac{2R}{r} \right)^c \leq 2^\rho C_Y \left( \frac{R}{r} \right)^\rho.
\end{equation*}

Therefore, $\d_A(X) = \rho$. Moreover, since $(Y,D) \subset (X, d)$, $\hat{R}(X) = \infty$.
\end{proof}

\bibliography{references}

\appendix

\section{Proofs of Metric-Dimension Lemmas} \label{proofs}

We first note that the $\g-$cover dimension agrees up to a constant with the ordinary definition of dimension of a normed vector space.

\doublingVector*

\begin{proof}

We first need to define the concept of an $\eps-$net:

\begin{definition}
Given a metric space $(X,d)$, a set of points $\mathcal{N} \subseteq X$ is called an $\eps-$\textbf{net} of $X$ if $X = \bigcup_{x \in \mathcal{N}} B(x, \eps)$ and $d(x,y) \geq \eps, \en \forall x \neq y \in \mathcal{N}$. Note that $\eps-$nets can be constructed greedily over $X$ by starting with $\mathcal{N} = \es$ and repeatedly adding an arbitrary point from $X \setminus \bigcup_{x \in \mathcal{N}} B(x, \eps)$ to $\mathcal{N}$ until $X$ is exhausted.
\end{definition}

Now we can proceed with our proof. Let $r>0$ and $\eps \in (0,1)$. We will prove that every open ball of radius $r$ can be covered by at most $(1 + 2/\eps)^d$ balls of radius $\eps r$. Let $x \in \R^d$ and let $\mathcal{N}$ be an $\eps r -$net of $B(x, r)$. Then, $B(x,r) \subseteq \bigcup_{y \in \mathcal{N}} B(y, \eps r)$. Note that the balls $B(y, \eps r/2)$ are all disjoint, so
$$\text{vol} \br{\bigcup_{y \in \mathcal{N}} B(y, \eps r/2)} = |\mathcal{N}| (\eps r/2)^d \text{vol}(\B)$$
where $\B$ is the unit ball in $(\R^d, \n{\cdot})$. Now, 
$$\bigcup_{y \in \mathcal{N}} B(y, \eps r/2) \subseteq \bigcup_{y \in B(x, r)} B(y, \eps r/2) \subseteq B(x, r + \eps r/2)$$
so $|\mathcal{N}|(\eps r/2)^d\text{vol}(\B) \leq r^d(1 + \eps/2)^d\text{vol}(\B)$. Hence, $|\mathcal{N}| \leq (1 + 2/\eps)^d$. This allows us to conclude that $\lambda_\g \leq (1 + 2\g)^d$, or equivalently, $\d_\g(X) \leq d \log_\g (1 + 2\g)$.

Similarly, we have that $\text{vol}(B(x, r)) = r^d \text{vol}(\mathbb B) = \g^d \text{vol}(B(y, r/\g))$ for all $y$, so $\lambda_\g \geq \g^d$, thus proving that $\d_\g(X) \geq d$. Therefore, we conclude that $d \leq \d_\g(X) \leq d \log_\g(1 + 2\g)$.
\end{proof}

\doublingSubmetric*

\begin{proof}
Let us denote with $B_Y(\cdot, \cdot)$ and $B_X(\cdot, \cdot)$ the open balls in $Y$ and $X$, respectively. Hence, $\forall y \in Y, \en r >0$, $B_Y(y, r) \subseteq B_X(y, r)$, which can be covered by $\lambda_\g^{\ceil{\log_\g (2\g)}}$ open balls in $X$ of radius $r/(2\g)$. Thus, we have
$$B_Y(y, r) \subseteq B_X(y, r) \subseteq \bigcup_{x' \in X'} B_X(x', r/(2\g))$$
where $|X'| \leq \lambda_\g^{\ceil{\log_\g (2\g)}}$. If we define $Y'$ to to be a set of arbitrarily chosen $y' \in B_Y(y, r)$ from every nonempty $B_X(x', r/(2\g))$, then clearly, $B_X(x', r/(2\g)) \subseteq B_X(y', r/2)$. Hence, $\lambda_\g(Y) \leq \lambda_\g^{\ceil{\log_\g (2\g)}}$, so $\d_\g(Y) < \d_\g(X)(2 + \log_\g 2)$.
\end{proof}

\doublingUnion*

\begin{proof}
Let $\lambda_X$ and $\lambda_Y$ be the $\g-$cover constants of $X$ and $Y$. If an open ball $B(z,R)$ is completely contained in $X$ or $Y$, then $N_{R/\g}(B(z, R)) \leq \max(\lambda_X, \lambda_Y)$. On the other hand, if $B(z,R)$ contains points $x \in X$ and $y \in Y$, then $R \geq d(X,Y)/2$. Therefore, $R/\g \geq \max\{\text{diam}(X), \text{diam}(Y)\}$, so the whole space can be covered by $B(x,R/\g)$ and $B(y, R/\g)$. Therefore, the doubling constant of $\lambda_g(Z) \leq \max\{\lambda_X, \lambda_Y\}$. 
\end{proof}

\doublingAssouad*

\begin{proof}
    Fix some $\g > 1$ and let $(X, d)$ be a metric space. Clearly, $\dim_A(X) < \infty$ implies that for every $R > 0$ and $x \in X$,
    \begin{equation*}
        N_{R/\g}(B(x, R)) \leq C \g^{\dim_A(X)},
    \end{equation*}
    so $\d_\g(X) \leq \d_A(X) + \log_\g C < \infty$. On the other hand, if $\d_\g(X) < \infty$, then we have two cases to consider. First, let $x \in X, \en R > 0$, and let $r = R/\a$ such that $\a \in (0, \g)$. Then,
    \begin{equation*}
        N_{R/\a}(B(x, R)) \leq N_{R/\g}(B(x, R)) = \lambda_\g.
    \end{equation*}
    Second, if $\a \geq \g$, then
    \begin{equation*}
        N_{R/\a}(B(x, R)) \leq \lambda_\g^{\ceil{\log_\g \a}} < \lambda_\g \a^{\d_\g(X)}.
    \end{equation*}
    Hence, $\d_A(X) \leq \d_\g(X)$.
\end{proof}

\assouadVector*

\begin{proof}
    In the proof of Lemma \ref{doubling_vector}, we showed that for any $x \in \R^d$ and $0 < r < R$,
    \begin{equation*}
         \left( \frac{R}{r} \right)^d \leq N_r(B(x, R)) \leq \left( 1 + 2\frac{R}{r} \right)^d < 3^d \left( \frac{R}{r} \right)^d.
    \end{equation*}
\end{proof}

\end{document}